\def\ARXIV{}
\def\EXTERNALCOMPETITION{}
\newcommand{\range}[2]{\in\{#1,\ldots,#2\}}
\newcommand{\gft}{\textsc{GFT}}
\newcommand{\citet}[1]{\citeauthor{#1}~\shortcite{#1}}
\newcommand{\citep}{\cite}
\newcommand{\floor}[1]{\text{floor}(#1)}
\newcommand{\recipe}{\mathbf{r}}
\newtheorem{theorem}{Theorem}
\newtheorem{lemma}{Lemma}
\newtheorem{claim}{Claim}
\title{Strongly Budget Balanced Auctions for Multi-Sided Markets}
\author{
Rica Gonen\\
The Open University of Israel\\
ricagonen@gmail.com
\And
Erel Segal-Halevi\\
Ariel University, Ariel, Israel\\
erelsgl@gmail.com
}
\begin{document}

\maketitle

\begin{abstract}
In two-sided markets, 
Myerson and Satterthwaite's impossibility theorem states that one can not maximize the gain-from-trade while also satisfying truthfulness, individual-rationality and no deficit.  Attempts have been made to circumvent Myerson and Satterthwaite's result by attaining approximately-maximum gain-from-trade: the double-sided auctions of McAfee (1992) is truthful and has no deficit, and the one by Segal-Halevi et al. (2016) additionally has no surplus --- it is strongly-budget-balanced. 
They consider two categories of agents --- buyers and sellers, where each trade set is composed of a single buyer and a single seller.

The practical complexity of applications such as supply chain require one to look beyond two-sided markets.  Common requirements are for: buyers trading with multiple sellers of different or identical items, buyers trading with sellers through transporters and mediators, and sellers trading with multiple buyers.  We attempt to address these settings.

We generalize Segal-Halevi et al. (2016)'s strongly-budget-balanced double-sided auction setting to a multilateral market where each trade set is composed of any number of agent categories. Our generalization refines the notion of competition in multi-sided auctions by introducing the concepts of external competition and trade reduction. 
We also show an obviously-truthful implementation of our auction using multiple ascending prices.

\ifdefined\ARXIV
\emph{This is a full version of a paper presented in the AAAI 2020 conference.}
\else
\emph{Full version, including omitted proofs and simulation experiments, is available at \url{https://arxiv.org/abs/1911.08094}.}
\fi
\end{abstract}

\section{Introduction}
Mechanism design for one-sided markets has been investigated for several decades in economics and in computer science. It aims to find an efficient (high social welfare) allocation of a set of items to a set of agents, while ensuring that truthfully reporting the input data is the best strategy for the agents. The Vickrey-Clarke-Groves (VCG) auction \cite{V61,C71,G73} is a pillar of mechanism design.  VCG auctions maximize the social welfare of the agents. They are dominant-strategy truthful (DST) --- each agent's dominant strategy is to truthfully report its preferences to the auction, regardless of what the other agents report. 
They can also be made individually rational (IR) --- no agent loses from participating in the auction.

More recently, there has been increased attention on auctions for two-sided markets, in which the set of agents is partitioned into buyers and sellers. As opposed to the one-sided setting, where the auctioneer initially holds the items, in the two-sided setting the items are initially held by the set of sellers. The sellers express valuations for the items they hold, and are assumed to act rationally and strategically. Thus, the auctioneer is tasked with deciding which buyers and sellers should trade and with what prices. 

The growing interest in two-sided markets can be attributed to various important applications. Examples range from selling display-advertising on ad exchange platforms, the US FCC spectrum license reallocation, and stock exchanges. However, little work has been done so far on the next level of generalization, i.e., multi-sided markets.

In two-sided markets, a further important requirement is strong budget-balance (SBB), which states that monetary transfers happen only among the agents in the market.  This means that buyers and sellers are allowed to trade without leaving the auctioneer any share of their gains and without the auctioneer adding money into the market. A weaker version of SBB, often considered in the literature, is weak budget-balance (WBB).  WBB only requires the auctioneer not to add money to the market. 
The problem with weak budget-balance is that the surplus of the auctioneer might consume most of the gain-from-trade, leaving little gain for the actual traders. This might drive traders away from the market.%
\footnote{
The following ``trick'' can be used to convert any WBB auction to an SBB auction: 
before the auction starts, remove a random trader from the market;
after the auction ends, give that trader all the surplus (if any).
We do not support this trick since it might induce agents who have nothing to do with the auction (e.g. ``sellers'' with nothing to sell or ``buyers'' with no money) to come to the market, only because of the chance to win all the surplus. 
Like \citet{colini2017approximately}, we focus on \emph{direct-trade auctions} --- auctions that give/take money only to/from agents who actually participate in the trade.
}
Note that, in bilateral trade settings, VCG is usually not even WBB except in special cases \cite{GMAC13}.

For double-sided auctions,
the impossibility theorem of 
\citet{myerson1983efficient} states that one can not maximize gain from trade (GFT, the difference between the total value of the sold items for the buyers and the total value of these items for the sellers) while also satisfying IR, DST, and no deficit.  

The seminal double-auction mechanism of
\citet{mcafee1992dominant} is DST, IR and WBB.
It
circumvents Myerson and Satterthwaite's result by compromising on GFT:
it may remove up to one deal from the optimal trade. In case a deal is removed, it is the one with the smallest GFT among the deals in the optimal trade; hence it attains at least $1-1/k$ of the optimal GFT, where $k$ is the number of optimal deals. Thus, it is \textbf{asymptotically optimal} --- its GFT approaches the optimum when $k\to\infty$.
\ifdefined\ARXIV
\footnote{
McAfee's mechanism is WBB because, when a deal is removed, the values of the removed agents are used to set the prices for the remaining agents: all buyers pay the value of the removed buyer, and all sellers receive the value of the removed seller. Thus there is a surplus that goes to the market manager.
}
\fi

Recently, \citet{SegalHalevi2016SBBA} presented a SBB variant of McAfee's mechanism, with similar GFT guarantees. Their mechanism may remove up to one \emph{buyer} from the optimal trade, and it is the buyer with the lowest value among the buyers in the optimal trade. 
In case a buyer is removed, the remaining $k-1$ buyers trade with $k-1$ sellers selected at random from the $k$ sellers in the optimal trade.
\ifdefined\ARXIV
\footnote{
Their mechanism is SBB because, when a buyer is removed, his value is used to set the prices for all remaining agents: all $k-1$ remaining buyers pay the value of the removed buyer, and the $k-1$ sellers selected at random receive the same value. Thus there is no deficit and no surplus.
Their mechanism has a variant in which a seller may be removed, instead of a buyer.
}
\fi

The complexity of practical requirements in areas such as supply chain require one to look beyond double-sided markets.
As an example \citep{babaioff2005incentivecompatible},
a market for lemonade may contain two kinds of sellers (lemon pickers and sugar producers), two kinds of buyers (juice drinkers and lemonade drinkers), and some intermediary agents (lemon squeezers, lemonade mixers, etc.)
Our goal is to address such settings while keeping the strong budget balance requirement.

\subsection{Our Contribution}
Our contribution is twofold: First we generalize \citet{SegalHalevi2016SBBA}'s SBB double auction to a multi-sided market where the trade set is composed of any number of agent types and any number of copies of any agent's type. Our generalization refines the notion of competition in multi-sided auctions by introducing the concept of external competition --- competition over who will act as a given participant's trade partner(s) (complementarity). 

The expanded notion of competition allows us to provide a simple well-performing procedure that generalizes \cite{mcafee1992dominant}'s trade reduction. The shift to thinking in terms of competition allows us to broadly address situations common to multi-sided auctions. These settings include trading entities that may be individuals or entire markets, transactions facilitated by zero or more intermediaries, and goods that can be exchanged individually or in bundles. 

These settings also encompass many common commercial mechanisms including supply chains, distributed markets, security exchanges, and business to consumer auctions. Historically each of these settings has been considered unique and each presented the complex research problem of finding a suitable mechanism (see section \ref{sec:related}  for details).

Second, in addition to the direct-revelation multi-sided auction, our result presents a multi-sided ascending-prices auction that implements the same outcome.
In the theory of one-sided auctions, it is well-known that a second-price direct-revelation auction and an ascending-prices auction are strategically equivalent.  In both auctions, the agent with the highest value wins and pays the second-highest value. However, an ascending-prices auction has the advantage that it is \emph{obviously truthful} (see \citet{li2017obviously} for formal definitions and proofs).
The practical advantage of an obviously-truthful auction is that it is easier for people 
to understand that playing truthfully is best for them, even if they are not experts in game theory. This is particularly important when one deals with complex multi-sided markets with many entities.

\subsection{Paper layout}
Section \ref{sec:model} presents the formal definitions.
Section \ref{sec:1recipe-ones}
presents a special case of our extended multilateral auctions in which each trade requires exactly one agent of each category.
Section \ref{sec:1recipe-general}
presents a more general case of our extended multilateral auctions in which each trade requires a fixed number of agents of each category, but this fixed number may be larger than 1.
Section \ref{sec:related} compares our work to related work.
\ifdefined\ARXIV
Section \ref{sec:experiments} presents some simulation experiments evaluating the performance of our auctions.
\else
The full version presents some simulation experiments evaluating the performance of our auctions.
\fi
Section \ref{sec:future} concludes with some future work directions.
An open-source implementation of our auctions, including example runs and experiments, is available at \url{https://github.com/erelsgl/auctions}.

\section{Preliminaries}
\label{sec:model}
\subsection{Agents and categories}
A \emph{market} is defined by a set of \emph{agents} grouped into different \emph{categories}. 
$N$ is the set of agents, 
$G$ is the set of agent categories, and $N_g$ is the set of agents in category $g\in G$.
The categories are pairwise-disjoint, so
$N = \sqcup_{g\in G} N_g$.

Each deal in the market requires a certain combination of agents. We call a subset of agents that can accomplish a single deal a \emph{procurement-set} (PS).
The \emph{PS recipe} of the market is a vector of size $|G|$, denoted by $\recipe := (r_g)_{g\in G}$, where $r_g\in \mathbb{Z}_+$ for all $g\in G$.
It describes the number of agents of each category that should be in each PS:
each PS should contain $r_1$ agents of category 1, $r_2$ agents of category 2, and so on.
As an example, the PS recipe of a standard two-sided market is $(1,1)$, since there are two agent categories --- buyers and sellers --- and each PS should contain one buyer and one seller.

As another example, consider a market with three categories of agents --- buyers, sellers and transporters, and PS recipe $(1,2,2)$. In such a market, each deal requires a buyer, two sellers and two transporters.

In general, one could think of markets with multiple PS recipes; however, in the present paper we restrict our attention to markets with a single PS recipe, denoted by $\recipe$.

Each agent $i\in N$ has a \emph{value} $v_i\in \mathbb{R}$, which represents the monetary gain of an agent from participating in the trade. The value of an agent is the agent's private information. It may be positive or negative. For example, in a two-sided market, the value of a buyer is typically positive while the value of a seller is typically negative.
The agents are \emph{quasi-linear in money}: the utility of agent $i$ participating in some PS and paying $p_i$ is $u_i := v_i - p_i$.

\subsection{Trades and Gains}
The \emph{gain-from-trade}  of a procurement-set $S$, denoted $GFT(S)$, is the sum of values of all agents in $S$: 
\begin{align*}
\gft(S) := \sum_{i\in S} v_i.
\end{align*}
In a standard two-sided market, the GFT of a PS with a buyer $b$  and a seller $s$ is $v_b - v_s$, since the seller's value is $-v_s$.

Given a market $(N,G,\recipe)$, a \emph{trade} is a collection of pairwise-disjoint procurement-sets. 
I.e, it is a collection of agent subsets, $S_1,\ldots,S_k \subseteq N$, such that for each $j\in[k]$, the composition of agents in $S_j$ corresponds to the recipe $\recipe$. The total GFT is the sum of the GFT of all procurement-sets participating in the trade:
\begin{align*}
\gft(S_1,\ldots,S_k) := 
\sum_{j=1}^k 
\gft(S_j)
\end{align*}
A trade is called \emph{optimal} if its GFT is maximum over all possible trades.

\subsection{Competition}
Our direct-revelation auctions are based on the concept of \emph{competition} between agents.
Given a trade $(S_1,\ldots,S_k)$, 
let $N_{rm} := N\setminus (S_1\cup\cdots\cup S_k)$ be the subset of agents who do not participate in the trade (the ``remaining market'').

Consider a single PS $S_j$, and an agent $i \in S_j$ who belongs to category $g$, i.e, $i \in N_g$.
Then, a subset of agents $T\subseteq N_{rm}$ is called an \emph{external competition} for $i$ if adding $i$ to $T$ yields a PS consistent with the recipe $\recipe$, 
with a positive GFT:
\begin{align*}
GFT (T \cup \{i\}) \geq 0
\end{align*}

In a simple two-sided market, the external competition of a trading buyer is a non-trading seller whose value is sufficiently high such that, combining the trading buyer with the non-trading seller yields a pair with a GFT above $0$.

In a three-sided market with buyers, sellers and mediators, with $\recipe=(1,1,1)$, an external competition of a trading buyer is a pair of a non-trading seller and a non-trading mediator, such that the GFT of the buyer+seller+mediator is at least $0$.

\section{One Agent Per Category}
\label{sec:1recipe-ones}
This section presents our two auctions for a special case in which the single PS recipe in the market is a vector of ones, so each PS must contain a single agent from each category.

Both auctions are parametrized by an ordering on the categories: each of the $|G|!$ possible orderings yields a different auction. The ordering should be fixed in advance and not depend on the agents' values.

We present the auctions using a running example with three categories in the following order: buyers, sellers and mediators. The recipe is $(1,1,1)$. In each category there are five agents. The agents' values are:
\begin{itemize}
\item Buyers: 17, 14, 13, 9, 6.
\item Sellers: -1, -4, -5, -8, -11.
\item Mediators: -1, -3, -4, -7, -10.
\end{itemize}

\subsection{External-competition auction}
\label{sub:1recipe-ones-auction}
The auction requests the agents to report their values, and then proceeds as follows.

\paragraph{Step 1: Optimal trade calculation.} 
Order the agents in each category by descending order of their value.
Combine the highest-value agents in each category to a PS. 
Combine the next-highest-value agents in each category into a PS. Keep constructing PS as long as the GFT of the constructed PS is positive. The resulting set of PS is the optimal trade. 
We denote by $k$ the number of PS in the optimal trade.

In the running example, $k=3$ and the optimal trade contains the following PS:
$(17, -1, -1)$ with GFT $15$, $(14, -4, -3)$ with GFT $7$, and $(13, -5, -4)$ with GFT $4$.
The remaining market, denoted by $N_{rm}$, contains two buyers $9,6$, two sellers $-8,-11$ and two mediators $-7,-10$.

\paragraph{Step 2.} Order the procurement-sets in the optimal trade by ascending GFT, such that,
$\gft(S_1)\leq \cdots \leq \gft(S_k)$.
In the running example, $S_1$ is the PS $(13, -5, -4)$.

\paragraph{Step 3.}
Consider the agents in $S_1$ in the pre-determined order of categories. 
Initialize $i$ to the first agent in $S_1$ by this ordering. In the running example, it is the buyer $13$.

\paragraph{Step 4.}
Look for an external competition to $i$ with a largest GFT. There are two cases.

\paragraph{Case 4a.}
No external competition for $i$ is found. 
Then, $i$ is removed from the trade (and added to $N_{rm}$), and we go back to step 4 with $i$ being the next agent in $S_1$.

In the running example, we consider first the buyer $13$. The maximum GFT of a PS that contains this buyer and agents from $N_{rm}$ is $-2$, for the PS $(13,-8,-7)$. This GFT is negative so it is not considered an external competition. Hence, the buyer $13$ is removed from trade.

\paragraph{Case 4b.}
An external competition for $i$ is found; denote it by $T_1$.
From now on we call this agent $i$ the \emph{pivot agent} and its category the \emph{pivot category}. Denote the pivot category by  $g_o$.
For each $g\neq g_o$, denote by $v^T_g$ the value of the single agent in $T_1\cap N_g$.
Trade prices are calculated as follows:
\begin{itemize}
\item The price $p_g$ for each agent in category $g\neq g_o$ is set to the value $v^T_g$.
\item The price $p_o$ for each agent in category $g_o$ is set to: $p_o := - \sum_{g\neq g_o} v^T_g$.
\end{itemize}

In the running example, the next member of $S_1$ (after the buyer $13$ is removed) is the seller $-5$. 
The maximum GFT of a PS that contains this seller and agents from $N_{rm}$ is $+1$, for the PS $(13,-5,-7)$; note that the removed buyer $13$ participates in this PS.
This GFT is positive so it is an external competition; the pivot category $g_o$ is the sellers' category.

The prices are set to $13$ for the buyers (like the buyer in $T_1$), $-7$ for the mediators (like the mediator in 
$T_1$), and $-6 = -(13-7)$ for the sellers.
The final price-vector is thus $(13, -6, -7)$, i.e., all buyers pay $13$, all sellers receive $6$ and all mediators receive $7$.

\paragraph{Step 5.}
Once the prices are calculated, the final trade is determined as follows:
\begin{itemize}
\item For each category, count the number of members remaining in the trade.
\item In each category with the smallest count, all agents participate in the trade.
\item In each category with a larger count, there is a lottery determining who will participate in the trade.
\end{itemize}

In the running example, there are two  remaining buyers ($17,14$) all of whom trade at price $13$; there are three remaining sellers ($-1,-4,-5$) two of whom (selected at random) trade at price $-6$; similarly, there are three remaining mediators ($-1,-3,-4$) two of whom trade at price $-7$.

Note that selecting a different one of the $6$ category-orders leads to a different outcome.
A-priori, there is no reason to prefer one ordering over the other --- our auction has the same desirable properties (proved below) for any ordering.

The SBBA auction of 
\citet{SegalHalevi2016SBBA}
is a special case of our auction, where the recipe is $(1,1)$.
Their two variants correspond to the two orderings --- buyers-sellers or sellers-buyers.

\subsection{Proof of correctness}
First, note that there must be an agent $i\in S_1$ for whom an external competition exists. In the worst case, when only one last agent of $S_1$ remains in the trade, the other agents of $S_1$ (who were previously removed from trade) form an external competition for this agent. This is because their total GFT is $\gft(S_1)$, which is positive since $S_1$ is in the optimal trade.

\begin{lemma}
\label{lem:ST}
For each category $g\in G$, 
denote by $v^S_g$ the value of the single agent  in $S_1\cap N_g$. Then:
\begin{align*}
\sum_{g \neq g_o} v^T_g \leq 
\sum_{g \neq g_o} v^S_g
\end{align*}
\end{lemma}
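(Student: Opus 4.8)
The plan is to prove the stronger, term-by-term inequality $v^T_g \le v^S_g$ for every category $g \neq g_o$, and then simply sum over $g$. The one fact I would lean on throughout is the property guaranteed by Step~1: in the optimal trade, the $k$ agents of any category $g$ that participate are exactly the $k$ highest-valued agents of $N_g$; consequently, for that category, every non-participating agent has value no greater than every participating agent. (If one of the agents of $S_1$ has been removed during Step~4 and returned to $N_{rm}$, I will treat it separately below.)

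First I would split the categories $g \neq g_o$, according to the fixed category ordering, into those processed \emph{after} $g_o$ and those processed \emph{before} $g_o$. For a category $g$ processed after $g_o$: by the time the pivot is identified, no agent of $N_g$ has been removed, since removals in Case~4a only affect agents of $S_1$ lying in categories that come strictly before $g_o$. Hence the agent $T_1\cap N_g$, being in $N_{rm}$, is a genuinely non-trading agent of category $g$, while $S_1\cap N_g$ is a trading agent of that category; the Step~1 property gives $v^T_g \le v^S_g$.

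Next, for a category $g$ processed before $g_o$: the agent $i_g := S_1\cap N_g$ was itself removed (that is precisely why the process moved past category $g$ to reach the pivot), so at the moment the pivot is identified, $N_{rm}\cap N_g$ consists of the originally non-trading agents of $N_g$ together with $i_g$, and nothing else (for this category only the single agent $i_g$ of $S_1$ could have been returned). Therefore $T_1\cap N_g$ is either $i_g$ itself, giving $v^T_g = v^S_g$, or an originally non-trading agent of $N_g$, giving $v^T_g \le v^S_g$ again by the Step~1 property (since $i_g$ is a trading agent). In all cases $v^T_g \le v^S_g$, and summing over all $g\neq g_o$ yields the stated inequality.

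I do not anticipate a real obstacle; the only delicate point is the bookkeeping of which agents belong to $N_{rm}$ at the instant the pivot is identified — specifically, that the only trading agents returned to $N_{rm}$ so far are the agents of $S_1$ in categories preceding $g_o$, and that each such category contributes exactly one of them. Once that is made precise, the inequality drops out immediately from the top-$k$ optimality structure of Step~1.
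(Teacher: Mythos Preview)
Your proposal is correct and follows essentially the same route as the paper: prove the term-by-term inequality $v^T_g \le v^S_g$ by splitting on whether category $g$ lies before or after the pivot category $g_o$, then sum. The only minor difference is that for $g$ preceding $g_o$ the paper observes that $T_1\cap N_g$ \emph{must} equal the removed agent $i_g$ (because $T_1$ is chosen with largest GFT, hence uses the highest-value available agent in each category, and $i_g$ dominates all originally non-trading agents of $N_g$), yielding the sharper conclusion $v^T_g = v^S_g$; your two-case argument gives only $v^T_g \le v^S_g$, which is all the lemma requires.
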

\begin{proof}
For $g<g_o$, the agent in $S_1\cap N_g$ had been removed from trade before the pivot was found, and was later used as an external competition for the pivot, so it is the same agent as in $T_1\cap N_g$ and thus $v^T_g = v^S_g$.

For $g>g_o$, the agent in $S_1\cap N_g$ had not been removed from trade, and thus, another (non-trading) agent was used as an external competition. Since the values of non-trading agents are smaller than that of trading agents, $v^T_g \leq  v^S_g$.
\end{proof}

Now we prove the properties of the auction.

\begin{theorem}
\label{thm:1recipe-ones}
The external-competition auction of Subsection \ref{sub:1recipe-ones-auction} is strongly-budget-balanced, individually-rational and dominant-strategy truthful, and its gain-from-trade approaches the optimum when the optimal market size ($k$) approaches $\infty$.
\end{theorem}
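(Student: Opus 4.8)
The plan is to establish the four properties in turn, handling strong budget balance and individual rationality first since they follow almost directly from the price definitions, then dominant-strategy truthfulness via the standard characterization, and finally the GFT approximation bound.

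\textbf{Strong budget balance.} In each executed trade, every procurement-set contains exactly one agent of each category, and all agents of a given category pay the same price. So the total payment collected in one procurement-set is $p_o + \sum_{g\neq g_o} p_g = -\sum_{g\neq g_o} v^T_g + \sum_{g\neq g_o} v^T_g = 0$. Since the final trade in Step 5 consists of some number $m$ of complete procurement-sets (the common count after the lottery is the smallest category count), the net transfer is $m\cdot 0 = 0$; money only moves among participating agents, so the auction is SBB (and in particular a direct-trade auction).

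\textbf{Individual rationality.} For a non-pivot category $g\neq g_o$: every agent who trades has value weakly better than that of the agent in $S_1\cap N_g$ (since all trading agents in category $g$ are the top-count agents by value), and by Lemma~\ref{lem:ST}-style reasoning each trading agent's value is at least as good as $v^T_g$, which is exactly the price $p_g$; hence utility $v_i - p_g \geq 0$. For the pivot category $g_o$: the price is $p_o = -\sum_{g\neq g_o} v^T_g$, and since $T_1$ is an external competition, $v^T_{\text{pivot-slot}} + \sum_{g\neq g_o} v^T_g \geq 0$, i.e. the pivot agent's own value satisfies $v_{\text{pivot}} \geq p_o$; every other trading agent in $g_o$ has value at least $v_{\text{pivot}}$ (again top-count by value, noting the pivot was the last survivor of $S_1$ in that category), so their utility is also nonnegative. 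I would spell out carefully that ``trading agents have value at least that of the relevant reference agent'' uses the optimal-trade construction in Step 1 together with the removal process in Step 4.

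\textbf{Truthfulness.} Here I would argue that the mechanism is of the ``fixed-price'' type in the sense of \citet{mcafee1992dominant} and \citet{SegalHalevi2016SBBA}: the price faced by each category is determined entirely by the reported values of agents who do \emph{not} end up setting it for themselves in a price-relevant way. Concretely I would show (a) the set of prices $(p_g)_{g\in G}$ does not depend on the value of any individual agent given that the agent is on the ``accepted'' side — the prices are functions of $S_1$ and of $T_1$, which are drawn from reported values but the pivot's price is a function of the \emph{competition} not of the pivot, and non-pivot prices are values of non-trading agents; and (b) given these prices, each agent trades iff doing so is (weakly) profitable at the posted price, i.e. the allocation rule is monotone and the agent cannot influence the price in its favor. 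The subtle points to verify are that an agent cannot manipulate \emph{which} category becomes pivotal or \emph{which} procurement-set is $S_1$ in a way that lowers its own price while keeping it a winner; the standard argument is that any such manipulation either does not change the agent's price, or changes it only by pushing the agent out of the trade, so truth-telling weakly dominates. This is the step I expect to be the main obstacle, because the pivot-selection loop in Step 4 (removing agents one at a time, then reusing removed agents as competition) creates several cases, and one must check each category's incentives separately — a trading non-pivot agent, the pivot agent, a non-trading agent, and an agent in $S_1$ that got removed. I would mirror the case analysis of \citet{SegalHalevi2016SBBA} and show each case reduces to a posted-price argument.

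\textbf{GFT approximation.} Let $k$ be the number of procurement-sets in the optimal trade. The optimal GFT is $\sum_{j=1}^k \gft(S_j)$ with $\gft(S_1)\leq\cdots\leq\gft(S_k)$. I would show the auction's realized GFT is at least $\sum_{j=2}^k \gft(S_j)$: all procurement-sets $S_2,\ldots,S_k$ survive intact (only agents of $S_1$ are ever removed in Step 4, and the final lottery in Step 5 only swaps in agents of equal-or-better value, so replacing $S_j$'s agents by the actual traders can only increase GFT), and the realized number of complete sets $m$ is at least $k-1$ because after removing at most... well, after the pivot is found, each non-pivot category still has at least $k-1$ surviving agents and the pivot category has at least $k-1$ as well (the pivot itself plus the $k-1$ from $S_2,\dots,S_k$ minus possibly the removed ones — here I would note the removed agents of $S_1$ all come from categories \emph{before} $g_o$ and there is at most one removed per such category, still leaving $\geq k-1$). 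Hence realized GFT $\geq \sum_{j=2}^k \gft(S_j) \geq \frac{k-1}{k}\sum_{j=1}^k \gft(S_j)$, since $\gft(S_1)$ is the smallest of $k$ nonnegative terms. Therefore the ratio to optimum is at least $1 - 1/k \to 1$ as $k\to\infty$. The one thing to be careful about is that the lottery is randomized, so the GFT bound should be stated and proved pointwise over the lottery outcomes (every outcome gives at least $\sum_{j=2}^k \gft(S_j)$), which is stronger than an expectation bound.
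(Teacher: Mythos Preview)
Your arguments for SBB and IR are correct and match the paper's. For truthfulness you gesture at a posted-price argument; the paper instead verifies Myerson's characterization directly, doing a three-way case split on whether a category lies before, at, or after the pivot category, and checking in each case that the price equals the agent's threshold value. Your sketch is not wrong in spirit, but ``any such manipulation either does not change the agent's price, or changes it only by pushing the agent out'' is exactly the kind of claim that needs the case analysis to substantiate, so you should expect to end up doing essentially the paper's work here.

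The real gap is in the GFT part. Your claim that the lottery ``only swaps in agents of equal-or-better value'' and hence realized GFT $\geq \sum_{j=2}^k \gft(S_j)$ \emph{pointwise} is false. For each category $g\geq g_o$ all $k$ agents $v^1_g,\dots,v^k_g$ are still present, and the lottery removes one of them uniformly at random; it may well remove the best one. Concretely, take $k=2$, category order (buyer, seller), buyers $100,99$ with a non-trading buyer of value $0$, sellers $-1,-98$ with a non-trading seller of value $-100$. Then $S_1=(99,-98)$, $S_2=(100,-1)$, buyer $99$ has no external competition and is removed, seller $-98$ becomes the pivot (since $99-98\geq 0$). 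One buyer ($100$) remains, two sellers remain, and the lottery may pick seller $-98$, giving realized GFT $=2$, whereas $\sum_{j\geq 2}\gft(S_j)=99$ and $(k-1)/k\cdot OPT=50$. So neither your pointwise bound nor even a pointwise $(k-1)/k$ bound holds.

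The correct statement, and what the paper proves, is an \emph{expected} GFT bound: for $g<g_o$ the surviving traders contribute $\sum_{j=2}^k v^j_g$, while for $g\geq g_o$ each of the $k$ agents trades with probability $(k-1)/k$, contributing $\frac{k-1}{k}\sum_{j=1}^k v^j_g$; summing over categories and using $\sum_{j=2}^k v^j_g \geq \frac{k-1}{k}\sum_{j=1}^k v^j_g$ (which holds because $v^1_g$ is the smallest) gives $\mathbb{E}[\gft]\geq \frac{k-1}{k}\,OPT$. You should rewrite the last paragraph as an expectation argument along these lines.
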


\noindent
\emph{Proof.} Strong budget balance is obvious: the price $p_o$ is calculated such that the sum of prices in each PS is $0$.

\paragraph{Individual rationality:} we prove that the price paid by each trading agent is at most the agent's reported value.
\begin{itemize}
\item Each trading agent in a category $g\neq g_o$ pays the value $v^T_g$ of a non-trading agent in the same category $g$. The agents in each category are ordered by descending value, 
and the value of each trading agent is at least as large as the value of each non-trading agent in the same category, so it is at least $v^T_g$.

\item 
Let $v^S_o$ be the value of the pivot agent (who is an agent in $S_1$).
By definition of external competition, the sum of values of agents in $T_1$ plus $v^S_o$ is at least $0$, so 
\begin{align*}
&
v^S_o + \sum_{g \neq g_o}v^T_g \geq 0
\\
\implies &
v^S_o \geq - \sum_{g \neq g_o}v^T_g = p_o
\end{align*}
Since agents are ordered by descending value, the values of other trading agents are at least $v^S_o$ which is at least $p_o$.
\end{itemize}

\paragraph{Truthfulness:}
By Myerson's theorem, it is sufficient to prove that the choice rule is monotone, and each trading agent pays his/her threshold value.

Monotonicity is obvious: an agent increasing his reported value (while other reports are fixed) is more likely to participate in the optimal trade, more likely to have an external competition, and thus more likely to remain in the trade.

To calculate the threshold value of an agent $i$ from category $g$, we consider three cases, depending on the fixed ordering of the categories:
\begin{enumerate}
\item The category $g$ comes before the pivot-category $g_o$. This means that an agent from $g$ had been removed from $S_1$ before the pivot was found.
All agents whose value is higher than $v^T_g$ are in PS $S_j$ for $j\geq 2$, they do not affect the auction in any way, and they remain in the trade. Any such agent whose value drops below $v^T_g$, replaces the $v^T_g$ agent in the PS $S_1$, and has no external competition, and so is removed from the trade. Therefore, $v^T_g$ is a threshold-value for all agents of $g$, and indeed $p_g = v^T_g$.
\item $g = g_o$. 
All agents whose value is higher than $v^S_o$ (the value of the pivot agent) are in PS $S_j$ for $j\geq 2$, they do not affect the auction in any way, and they remain in the trade. 
Consider an agent of $g_o$ whose value $v_o$ drops below $v^S_o$ but above $p_o$ (recall that $v^S_o\geq p_o$).
We claim that this agent remains in the trade.
First, $v_o$ is still in the optimal trade (it
replaces the pivot agent in $S_1$), since:
\begin{align*}
v_o + \sum_{g \neq g_o} v^S_g 
&\geq 
v_o +  \sum_{g \neq g_o} v^T_g \text{~~(by Lemma \ref{lem:ST})}
\\
& = 
v_o - p_o \text{~~(by definition of $p_o$)}
\\
& \geq 0 \text{~~(by assumption on $v_o$)},
\end{align*}
so the GFT of $v_o$ plus the other agents in $S_1$ is still above 0.
Second, $T_1$ is still an external competition for $v_o$, since:
\begin{align*}
v_o + \sum_{g \neq g_o} v^T_g = 
v_o - p_o \geq 0.
\end{align*}
But, once $v_o$ drops below $p_o$, the set $T_1$ is no longer an external competition, so the agent is removed from the trade. 
Hence, $p_o$ is a threshold value for all agents of $g_o$.
\item The category $g$ comes after the pivot-category $g_o$.
This means that no agent from $g$ had been removed from the trade before the pivot was found.
As shown in Lemma \ref{lem:ST}, 
in this case $v^S_g\geq v^T_g$.
All agents whose value is higher than $v^S_g$ are in PS $S_j$ for $j\geq 2$,  they do not affect the auction in any way, and they remain in the trade.

Consider an agent of $g$ whose value $v_g$ drops below $v^S_g$ but above $v^T_g$.
We claim that this agent remains in the optimal trade (it replaces the agent $v^S_g$ in $S_1$), since:
\begin{align*}
GFT(S_1) - v^S_g + v_g
&\geq 
GFT(S_1) - v^S_g + v^T_g 
\\
&=
v^S_o + \sum_{g \neq g_o} v^S_g - v^S_g + v^T_g
\\
&\geq 
v^S_o + \sum_{g \neq g_o} v^T_g \text{~~(by Lemma \ref{lem:ST})}
\\
&\geq  0 \text{~~($T_1$ is external competition)}
\end{align*} 
so the GFT of $v_g$ plus the other agents in $S_1$ is above 0.

But, once $v_g$ drops below $v^T_g$, it is replaced by the agent $v^T_g$ in $S_1$, and does not enter the optimal trade. 
Hence, $v^T_g$ is a threshold-value for all agents of $g$, and $p_g = v^T_g$.
\end{enumerate}

\paragraph{Gain-from-trade:}
For each $g\in G$ and $j\range{1}{k}$, denote by $v^j_{g}$ the value of the single agent in $N_g\cap S_j$. Then the optimal GFT is:
\begin{align*}
OPT = \sum_{g\in G}\sum_{j=1}^k v^j_{g}
\end{align*}
If no traders are removed, then all these $k$ PS are trading, and the GFT equals OPT.
If some traders are removed, they are removed from $S_1$ which is the least profitable PS.
In this case, $k-1$ deals are made, where in each deal, the trader from each category $g$ is:
\begin{itemize}
\item If $g$ is before the pivot --- one of the $k-1$ high-value traders in $g$;
\item If $g$ is the pivot or after the pivot --- one of the $k$ high-value traders in $g$, selected at random.
\end{itemize}
Hence, the expected GFT is at least:
\begin{align*}
&\sum_{g < g_o}\sum_{j=2}^k v^j_{g}
+
\sum_{g \geq g_o} \frac{k-1}{k} \sum_{j=1}^k v^j_{g}
\\
\geq &
\sum_{g \in G} \frac{k-1}{k} \sum_{j=1}^k v^j_{g}
\\
= & \frac{k-1}{k} OPT.
\end{align*}

\subsection{Ascending-prices auction}
\label{sub:1recipe-ones-clock}
Our ascending-prices auction holds a price $p_g$ for each category $g\in G$.
All prices are initialized to $-\infty$, and initially all agents are in the trade (since every agent will be happy to pay $-\infty$).
While the prices increase, each agent in category $g$ with value $v_g$ remains in the trade as long as $p_g < v_g$, and exits the trade when $p_g > v_g$ (since the prices increase monotonically, agents never return to the trade after exiting).
When $p_g = v_g$, the agent is indifferent between trading and not trading; for simplicity, we assume that in this case the agent does not trade.
Also, for simplicity we assume that the agents' valuations are \emph{generic} in the sense that, for each category $g\in G$, all agents have different values. 
During the presentation of the ascending auction, we use the same running example as in Subsection \ref{sub:1recipe-ones-auction}.

\paragraph{Step 1: Initialization.}
For each category $g$, count the number of agents in $N_g$; let $n_{\min}$ be the size of the smallest category. For each category $g$ with more than $n_{\min}$ agents, increase the price $p_g$ such that some agents leave the trade, until the number of remaining agents in all categories is $n_{\min}$.

In the running example, this step is not needed since initially there are 5 agents in each category.

\paragraph{Step 2.}
Loop over the categories in the pre-specified order. For each category $g$, increase $p_g$ continuously until one of the following happens:

(a) an agent from category $g$ exits the trade, or ---

(b) the sum of prices increases to zero: $\sum_{g\in G}p_g = 0$.

In case (a), repeat the step with the next category (after the last category, return to the first one).
If a category becomes empty, the auction stops and there is no trade.

In case (b), stop and have the agents trade in the final prices: each agent in category $g\in G$ trades at price $p_g$. If, in some category, there are more remaining agents than in other categories, then a lottery is used to select who will trade.

In the running example, at the first round, the buyers' price increases to 6, the sellers' price increases to -11, the mediators' price increases to -10; after the first round, there are 4 agents remaining in each category, and the sum of prices is still negative, so we continue.
At the second round, the prices increase to 9, -8, -7 and the sum is still negative.
At the third round, the buyers' price increases to 13
and the sellers' price is increased towards -5, but when it hits -6, the sum of prices becomes 0 so the auction stops.
The final trade is exactly the same as in the external-competition auction.

\begin{theorem}
\label{thm:1recipe-ones-clock}
The ascending-prices auction of Subsection \ref{sub:1recipe-ones-clock} is strongly-budget-balanced, individually-rational and
obviously truthful, and its gain-from-trade approaches the optimum when the optimal market size ($k$) approaches $\infty$.
\end{theorem}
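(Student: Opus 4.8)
The plan is to prove Theorem \ref{thm:1recipe-ones-clock} by establishing that the ascending-prices auction produces \emph{exactly the same outcome} as the external-competition auction of Theorem \ref{thm:1recipe-ones} (final trade and final prices), and then inheriting strong budget balance, individual rationality, and the $\frac{k-1}{k}$ GFT bound directly from Theorem \ref{thm:1recipe-ones}. The only genuinely new thing to prove is obvious truthfulness, which I will argue separately using the structure of the clock auction.

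For the equivalence, I would first observe that as the prices rise in the pre-specified category order, the set of agents surviving in category $g$ at any moment is always a top-$n_g$ prefix of that category (since agents are ordered by value and generic values break ties), so the process is really tracking, for each category, how many high-value agents remain. I would show by induction on the rounds of Step 2 that the vector of surviving counts, together with the running price vector, evolves in lockstep with what the direct auction computes: Step 1 equalizes all counts to $n_{\min}$, matching the direct auction's optimal-trade computation insofar as it never keeps more agents of one category than another; then each pass through Step 2 that ejects one agent from some category corresponds to removing one procurement-set from the bottom (raising a price until the next-lowest-GFT PS can no longer be sustained), and the termination condition $\sum_g p_g = 0$ is exactly the condition ``the GFT of the marginal PS hits zero.'' The final prices $p_g$ at stopping are the values at which the marginal agents dropped out, which I would identify with the $v^T_g$ and $p_o$ prices of the direct auction: for categories before the pivot, $p_g$ is the value of the last ejected agent of $g$ (which is the external-competition partner's value $v^T_g$); for the pivot category and later categories, $p_g$ is pinned down by the requirement that the sum be exactly zero, matching $p_o := -\sum_{g\neq g_o} v^T_g$ and $v^T_g$ respectively. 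Once prices and surviving counts coincide, the lottery in Step 2(b) matches Step 5 of the direct auction, so the random final trades have the same distribution, and hence the same expected GFT.

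For obvious truthfulness in the sense of \citet{li2017obviously}, I would describe the extensive-form game in which an agent of category $g$ with value $v_g$ is, at each point where $p_g$ is being raised through her, offered the binary choice ``stay in at the current clock price or exit.'' The key point is that at every node where the agent still has a live decision, the \emph{best} possible outcome from deviating (exiting early when she should stay, or staying when she should exit) is no better than the \emph{worst} possible outcome from playing truthfully: if her true value exceeds the current price, truthful play guarantees utility at least (current price gap) $\ge 0$ in every continuation — prices only rise, she trades at some price $\le v_g$ or the auction ends with no trade at utility $0$ — whereas exiting locks in utility exactly $0$; conversely if the price has risen above $v_g$, truthful exit locks in $0$ while staying risks trading at a price above $v_g$ for negative utility. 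This comparison of best-case-deviation against worst-case-truthful at each information set is precisely Li's criterion, so the auction is obviously strategy-proof, and a fortiori dominant-strategy truthful, consistent with the equivalence to the DST direct auction.

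The main obstacle I expect is the bookkeeping in the equivalence proof: making the induction airtight requires carefully matching the order in which the clock visits categories against the order in which the direct auction removes agents from $S_1$, and handling the boundary case where raising a price would empty a category (the clock aborts with no trade) versus the direct-auction guarantee, noted just before Lemma \ref{lem:ST}, that an external competition always exists for \emph{some} agent of $S_1$ — I need to confirm the clock's stopping rule fires before any category is exhausted whenever the direct auction makes a nonempty trade. A secondary subtlety is the treatment of indifference ($p_g = v_g$): since the paper stipulates such an agent does not trade and values are generic, this does not affect counts, but it must be invoked explicitly so that the clock prices land on exactly the threshold values rather than infinitesimally above them.
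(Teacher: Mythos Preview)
Your approach is correct but takes a genuinely different route from the paper. The paper does \emph{not} establish outcome-equivalence with the external-competition auction; instead it argues SBB and IR directly from the description, proves obvious truthfulness essentially as you do (worst outcome of truthful play is $0$; best outcome of any deviation at the divergence point is at most $0$), and handles the GFT bound by a direct count: letting $n_o$ be the number of remaining agents in the stopping category, it shows $k=n_o$ by a two-sided sandwich (raise the post-pivot prices to the next threshold to see $k\ge n_o$; lower the pre-pivot prices to the previous threshold to see $k<n_o+1$), and then reuses only the final GFT computation from Theorem~\ref{thm:1recipe-ones}.

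Your equivalence route is more ambitious and, once carried out, more unifying: SBB, IR, and the exact $(k-1)/k$ bound all come for free from Theorem~\ref{thm:1recipe-ones}, and DST follows as a corollary of obvious truthfulness. The price is the bookkeeping you anticipate --- you must track that after the round in which the sum first could become nonnegative, the clock has removed exactly the $n_{\min}-k$ lowest agents from each category, that the prices for $g<g_o$ sit at the $k$-th highest value (the agent just removed from $S_1$), the prices for $g>g_o$ sit at the $(k{+}1)$-th highest value (the previous round's exit, i.e.\ the top non-trading agent), and $p_{g_o}$ is pinned by the zero-sum condition --- which is exactly the price vector of the direct auction. One small imprecision in your sketch: you write that for the pivot \emph{and later} categories the price is ``pinned down by the requirement that the sum be exactly zero''; in fact only $p_{g_o}$ is so determined, while for $g>g_o$ the clock price is simply inherited from the previous round and happens to equal $v^T_g$. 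Your obvious-truthfulness argument is substantively the same as the paper's, though note the worst case of truthful play is exactly $0$ (losing the lottery or exiting), not the ``current price gap''.
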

\noindent
\begin{proof}
SBB and IR are immediate from the description.


As for obvious-truthfulness:
\citet{li2017obviously}
defines a strategy $S$ as obviously-dominant (for a given agent) if ``for any deviating strategy $T$, starting from an earliest information set where $S$ and $T$ diverge, the best possible outcome from $T$ is no better than the worst possible outcome from $S$''. 
We show that, in the ascending-prices auction, for each agent $i$ in category $g$, the strategy $S$ of exiting when $p_g = v_i$ is obviously-dominant.

The worst outcome from $S$ has a value of $0$.
We now show that, for any deviation $T$, the best possible outcome from $T$ when $S$ and $T$ diverge has a value of at most $0$.
Indeed, if $T$ is exiting too early (at some $v_i' < v_i$), then the point at which $S$ and $T$ diverge is when $p_g  = v_i'$, and at that point the outcome from $T$ has a value of $0$.
If $T$ is exiting too late (at some $v_i'' > v_i$), then the point at which $S$ and $T$ diverge is when $p_g  = v_i$, 
and at that point all possible outcomes from $T$ have a value of $0$ or less.

We now analyze the gain-from-trade.

Let $g_o$ be the category in which the protocol stops. 
Let $n_o$ be the number of traders of this category that remain in the trade.
Then, in each category $g<g_o$, there are $n_o-1$ remaining traders, and in each category $g\geq g_o$, there are $n_o$ remaining traders.

Recall that $k$ is the number of deals in the optimal trade; we claim that $k = n_o$:
\begin{itemize}
\item First, suppose that the price $p_g$ of each category $g\geq g_o$ is increased up to the value of the next agent in $g$ (who did not exit the trade in the actual auction). Since the auction stopped when the sum of prices hit 0, the sum of prices after the increase is positive.
Each price $p_g$ equals the $n_o$-th highest value in category $g$. This means that there are at least $n_o$ procurement-sets with a positive GFT, so $k\geq n_o$.
\item Second, suppose that the price $p_g$ of each category $g\leq g_o$ is decreased down to the value of the previous agent in $g$ (who did exit the trade in the actual auction). Now the sum of prices is negative. 
Each price $p_g$ equals the $(n_o+1)$-th highest value in category $g$. 
This means that there are not $(n_o+1)$ procurement-sets with a positive GFT, so $k < n_o+1$. Hence, $k = n_o$.
\end{itemize}
So at least $k-1$ deals are done. From here, the proof is identical to the gain-from-trade proof in Theorem \ref{thm:1recipe-ones}. 
\end{proof}

\section{General Procurement-Set Recipes}
\label{sec:1recipe-general}
This section extends the previous one by allowing the PS recipe to be an arbitrary vector of positive integers, rather than just a vector of ones. 
For each category $g$ there is an integer $r_g\geq 1$, and every PS must contain exactly $r_g$ traders from this category.

\ifdefined\EXTERNALCOMPETITION
\else
The external-competition auction can be extended to the setting of an arbitrary vector of positive integers. However, as the proof is somewhat involved, we choose to focus here on the ascending-prices auction extension.
\fi

We present the mechanisms using a running example in which there are two categories --- buyers and sellers, and the recipe is $(1,2)$, so that each PS should contain one buyer and two sellers. The market contains:
\begin{itemize}
\item Five buyers with values: 17, 14, 13, 9, 6.
\item Nine sellers with values: -1, -2, -3, -4, -5, -7, -8, -10, -11.
\end{itemize}

\ifdefined\EXTERNALCOMPETITION
For the external-competition auction we present two larger running examples:
a second example with three categories, recipe $(2,2,3)$, and valuations:
\begin{itemize}
\item Buyers: $[17, 16, 15, 14, 13, 12, 10, 6]$
\item Mediators: $[-3, -4, -5, -6, -7, -8, -9, -10]$
\item Sellers: $[-1, -2, -3, -4, -5, -6, -7, -8]$
\end{itemize}
and a third example (for illustrating the difference in Step 1 and 2 from Section \ref{sec:1recipe-ones}) with two categories, recipe $(3,2)$, and valuations:
\begin{itemize}
\item Buyers: $[20, 18, 16, 9, 2, 1]$
\item Sellers: $[-2, -4, -6, -8, -10, -12, -14]$
\end{itemize}

\else
Note that the optimal trade in this setting can be calculated just like in Section \ref{sec:1recipe-ones}: the agents in each category are ordered by descending value, and then grouped greedily into procurement-sets.
In the running example, the optimal trade contains three PS:
$(17; -1, -2)$ with GFT $14$,
$(14; -3, -4)$ with GFT $7$,
and $(13; -5, -7)$ with GFT $1$.
\fi

\ifdefined\EXTERNALCOMPETITION
\subsection{External-competition auction}
\label{sub:ec-auction}

\paragraph{Step 1: PS Construction.} 
Order the agents in each category by descending order of their value.
Combine the highest-value agents in each category to a PS. 
Combine the next-highest-value agents in each category into a PS. Keep constructing PS as long as there are enough agents to follow the recipe of a PS. Unlike Section \ref{sec:1recipe-ones}, the GFT of a constructed PS can be negative. The set of PS with positive GFT is the optimal trade.  
Like Section \ref{sec:1recipe-ones} we denote by $k$ the number of PS in the optimal trade. We denote by $w$ the number of PS we constructed in total (including the negative GFT ones if exist).

In the first running example, $k=3$ and $w=4$. The optimal trade contains three PS:
$(17; -1, -2)$ with GFT $14$,
$(14; -3, -4)$ with GFT $7$,
and $(13; -5, -7)$ with GFT $1$.
The entire PS set contains, besides the three optimal PS, also 
$(9; -8, -10)$ with GFT $-9$.
The remaining market, that we denote by $W_{rm}$, contains the buyer $6$ and the seller $-11$ (the agents not in any PS, whether negative or positive).

In the second running example, $k=w=2$. The optimal trade contains two PS:
$(17, 16; -3, -4; -1, -2, -3)$ with GFT $20$,
and $(15, 14; -5, -6; -4, -5, -6)$ with GFT $3$.
The remaining market $W_{rm}$ contains the buyers $13, 12, 10, 6$, the mediators $-7, -8, -9, -10$ and the sellers $-7, -8$.

In the third running example, $k=1$ and $w=2$. The optimal trade contains the following PS:
$(20, 18, 16; -2, -4)$ with GFT $48$. 
The PS set contains also the following PS: 
and $(9, 2, 1; -6, -8)$ with GFT $-2$.
The remaining market $W_{rm}$ contains no buyers, and the sellers $-10,-12, -14$.

\paragraph{Step 2.} Order all the procurement-sets (including the negative ones) by ascending GFT, such that,
$\gft(S_1)\leq \cdots \leq \gft(S_w)$.
In the first running example, $S_1$ is the PS $(9; -8, -10)$.
In the second running example, $S_1$ is the PS $(15, 14; -5, -6; -4, -5, -6)$.
In the third running example, $S_1$ is the PS $(9, 2, 1; -6, -8)$.

\paragraph{Step 3.}
Consider the agents in $S_1$ in the pre-determined order of categories. 
Inside each category, order the agents by increasing value. In the first running example, 
the order is
buyer $9$, seller $-10$, seller $-8$.
In the second running example, the order is buyer $14$, buyer $15$, mediator $-6$, mediator $-5$, seller $-6$, seller $-5$, seller $-4$.
In the third running example, the order is buyer $1$, buyer $2$, buyer $9$, seller $-8$, seller $-6$.

Let $i$ be the first agent in $S_1$ by this ordering, which has the lowest value of his category in $S_1$. In the first running example $i$ is buyer $9$, in the second running example $i$ is buyer $14$ and the third running example $i$ is buyer $1$.

\paragraph{Step 4.}
Look for an external competition for $i$ with a largest GFT. 
Importantly, an external competition is constructed using a \emph{single} agent from each category $g$ (including $i$'s category), which is duplicated $r_g$ times.
The agents from categories different than $i$ are selected  from the remaining market $W_{rm}$.

For example, to construct an external competition for buyer $9$ in the first running example, we take the highest-valued non-trading seller $-11$, create two copies of it, and get the PS $(9,-11,-11)$. 
Its GFT is negative so it cannot serve as external competition.
In the second running example, to construct an external competition for buyer $14$, we take the highest-valued non trading mediator $-7$, create two copies of it, then take the highest-valued non trading seller $-7$, create three copies of it and get the PS $(14,14;-7,-7;-7,-7,-7)$. Its GFT is negative so it cannot serve as external competition. 
In the third running example, to construct an external competition for buyer $1$, we take the highest-valued non trading seller $-10$, create two copies of it and get the PS $(1,1,1;-10,-10)$. Its GFT is negative so it cannot serve as external competition. 

Now, there are two cases.

\paragraph{Case 4a.}
No external competition for $i$ is found. 
Then, $i$ is removed from the trade (and added to $W_{rm}$), and we go back to step 4 with the next agent in $S_1$.

In the first running example, as explained above, there is no external competition for buyer $9$ so it is removed from trade.
Next, we consider the seller $-10$.
We create two copies of it and add the highest-valued non-trading buyer ($9$). We get the PS $(9,-10,-10)$. 
Its GFT is negative so it cannot be an external competition. Hence, seller $-10$ is removed from trade.
Similarly, we remove the seller  $-8$, 
the buyer $13$, 
and the seller $-7$, 

In the second running example, as explained above, there is no external competition for buyer $14$ so it is removed from trade.
Similarly we remove the buyer $15$.
and the mediators $-6$ and $-5$.

In the third running example, as explained above, there is no external competition for buyer $1$ so it is removed from trade. Similarly we remove the buyer $2$.

\paragraph{Case 4b.}
An external competition for $i$ is found; denote it by $I_{co}$.
We denote the pivot category by $g_o$.

In the first running example, the next agent to check is seller $-5$. The buyer $13$ serves as an external competition for it, since the GFT of $(13, -5, -5)$ is positive. Hence the pivot category is the category of sellers.

In the second running example, the next agent to check is seller $-6$. The buyer $15$ and seller $-5$ and their copies serve as an external competition for it, since the GFT of $(15, 15, -5, -5, -6, -6, -6)$ is positive. Hence the pivot category is the category of sellers.

In the third running example, the next agent to check is buyer $9$. The seller $-10$ serves as an external competition for it, since the GFT of $(9, 9, 9; -10, -10)$ is positive. Hence the pivot category is the category of buyers.

The trade-prices are calculated as follows:
\begin{itemize}
\item The price $p_g$ for each agent in category $g\neq g_o$ is set to the value $v_g$ of the single agent in $I_{co}\cap N_g$.
Note that, even if $r_g > 1$, there is still a single agent in $I_{co}$, which we duplicate $r_g$ times.

In the first running example, the price for a buyer is set to $13$.
In the second running example, the price for a buyer is set to $15$ and the price for a mediator is set to $-5$.
In the third running example, the price for a seller is set to $-10$.

\item The price $p_o$ for each agent in category $g_o$ is set to: $p_o := - (\sum_{g\neq g_o} r_g \cdot v_g) / r_o$. 

In the first running example, the price for each seller is set to $- 13 / 2 = - 6.5$.
In the second running example, the price for each seller is set to $-(15*2+(-5)*2)/3=-6.67$.
In the third running example, the price for each buyer is set to $-((-10)*2)/3=6.67$. 
\end{itemize}

The trade is calculated just like in Section \ref{sec:1recipe-ones}. 
In the first running example, there are two remaining buyers ($17,14$) and all of them trade at price $13$; there are five remaining sellers ($-1,-2,-3,-4,-5$) and four of them, selected at random, trade at price $-6.5$.
In the second running example, there are two remaining buyers ($17,16$) and all of them trade at price $15$; there are two remaining mediators ($-3,-4$) and all of them trade at price $-5$; there are six remaining sellers ($-1,-2,-3,-4,-5,-6$) and three of them, selected at random, trade at price $-6.67$.
In the third running example, there are four remaining buyers ($20, 18, 16, 9$) and three of them, selected at random, trade at price $6.67$; There are four remaining sellers ($-2, -4, -6, -8$) and two of them, selected at random, trade at price $-10$.

As an additional illustration of the mechanism, suppose it is run on the same market as above, but with a different category order: sellers before buyers. 
Then, in the first running example, the agents in $S_1$ are checked in order $(-10, -8, 9)$. 
Seller $-10$ is considered first. The highest-valued non-trading buyer is $6$, but the GFT of the PS $(-10,-10,6)$ is negative, so $-10$ is removed. Similarly, $-8$ is removed and so is buyer $9$, seller $-7$ and seller $-5$.
Now, buyer $13$ has an external competition --- the removed seller $-5$, since the GFT of $(13,-5,-5)$ is positive. 
The seller price is set to $-5$ and the buyer price is set to $-(-5\cdot 2) = 10$. 
There are four remaining sellers $(-1,-2,-3,-4)$ and all of them trade at price $-5$; there are three remaining buyers $(17,14,13)$ and two of them, selected at random, trade at price $+10$.

In the second running example if we consider a different category order, for instance: mediators then sellers and last buyers.
Then, the agents in $S_1$ are checked in order $(-6, -5, -6, -5, -4, 14, 15)$. 
Mediator $-6$ is considered first. The highest-valued non-trading buyer is $13$ and the highest-valued non-trading seller is $-7$, but the GFT of the PS $(-6, -6, -7, -7, -7, 13, 13)$ is negative, so mediator $-6$ is removed. Similarly, mediator $-5$ is removed. Then seller $-6$ is considered. The highest-valued non-trading buyer is $13$ and the highest-valued non-trading mediator is $-5$, but the GFT of the PS $(-5, -5, -6, -6, -6, 13, 13)$ is negative, so seller $-6$ is removed. 
Now, seller $-5$ has an external competition --- the removed mediator $-5$ and non-trading buyer $13$, since the GFT of $(-5, -5, -5, -5, -5, 13, 13)$ is positive. 
The buyer price is set to $13$, the mediator price is set to $-5$ and the seller price is set to $-(13\cdot 2+(-5)\cdot 2)/3 = -5.3$. 
There are four remaining buyers $(17, 16, 15, 14)$ and two of them, selected at random, trade at price $13$; there are two remaining mediators  $(-3, -4)$ and all of them trade at price $-5$; there are five remaining sellers $(-1, -2, -3, -4, -5)$ and three of them, selected at random, trade at price $-5.3$.

In the third running example, the agents in $S_1$ are checked in order $(-8, -6, 1, 2, 9)$. 
Seller $-8$ is considered first. There are no buyers that are not part of a PS, so $-8$ is removed. Similarly seller $-6$ is removed.   Buyer $1$ is considered next. The highest-valued non-trading seller is $-6$, but the GFT of the PS $(1, 1, 1, -6, -6)$ is negative, so buyer $1$ is removed. Similarly buyer $2$ is removed.
Now, buyer $9$ has an external competition --- the removed seller $-6$, since the GFT of $(9, 9, 9, -6,-6)$ is positive. 
The seller price is set to $-6$ and the buyer price is set to $-(-6\cdot 2)/3 = 4$. 
There are two remaining sellers $(-2,-4)$ and all of them trade at price $-6$; there are four remaining buyers $(20, 18,16, 9)$ and three of them, selected at random, trade at price $+4$.

\subsection{Proof of correctness}
\label{sub:ec-proof}

First we prove a claim regarding the reduction processes that will help us to prove the economic properties. Below, by \emph{negative PS} we mean a procurement-set with a negative GFT.
\begin{claim}
\label{claim-S1}
The trade-reduction process either removes all agents in negative PS, or leaves a strict subset of a single negative PS, which is the negative PS with the highest GFT.
\end{claim}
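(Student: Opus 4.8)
The plan is to analyze the reduction process in Step 4 by tracking, for each category $g$, how many agents of a given negative PS have been removed, and comparing this with the greedy structure of the constructed PS. Recall from Step 1 that all $w$ procurement-sets are built greedily: within each category the agents are sorted by descending value, and $S_j$ (after the ascending-GFT reordering of Step 2) for $j \ge k+1$ are exactly the negative PS, with $S_1$ being the negative PS of highest GFT (if any negative PS exist at all; if none exist the claim is vacuous). The key observation is monotonicity of the greedy construction: if we list all constructed PS in ascending GFT order $S_1, \ldots, S_w$, then for each category $g$ the value of the agent of category $g$ in $S_j$ is (weakly) increasing in $j$ — a negative PS with lower GFT is composed, coordinate-wise within each category, of lower-valued agents. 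I would prove this structural fact first, as it is what lets the removal cascade be controlled.

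Next I would argue by induction on the sequence of removals in Step 4. The process starts with $i$ = the lowest-value agent (in the fixed category order, lowest value within a category) of $S_1$, the highest-GFT negative PS. As long as we are in Case 4a, agents of $S_1$ are removed one by one and added to $W_{rm}$; crucially, once an agent of category $g$ is removed from $S_1$ and returned to $W_{rm}$, it becomes available to build external competitions, but I must show that this does not cause the process to "escape" $S_1$ prematurely, i.e. that no external competition with positive GFT can be formed for an agent of $S_1$ using only agents of strictly lower value than those in $S_1$. This is where the highest-GFT property of $S_1$ is used: any single-agent-per-category set drawn from $W_{rm}$ (plus previously removed members of $S_1$) has GFT at most $\gft(S_1) < 0$, because replacing each coordinate by the corresponding highest-available non-trading agent can only decrease the GFT relative to $S_1$ (by the greedy/monotonicity fact, those agents have values no higher than the $S_1$ agents). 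So the process cannot stop inside $S_1$ until it has exhausted $S_1$ — hence if it ever reaches Case 4b, the pivot agent lies in $S_1$ and only a strict subset of $S_1$ has been removed; and $S_1$ is the highest-GFT negative PS, giving exactly the second alternative in the claim. If the process never reaches Case 4b, every agent of $S_1$ is removed, and I then need to show all other negative PS are likewise fully removed — this follows because once all of $S_1$ is gone, the reduction continues into $S_2$ (next-lowest GFT), and the same argument applies, or more simply: the termination guarantee stated just before Lemma \ref{lem:ST} ensures that some agent always eventually has an external competition, so "never reaching 4b" would contradict that guarantee unless no positive-GFT PS remain — which happens precisely when all negative PS (and, in the degenerate case, everything) are removed.

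The main obstacle I anticipate is the bookkeeping in the first alternative: carefully ruling out the scenario where, after removing some agents of $S_1$, an agent of a \emph{different} negative PS $S_j$ ($j \ge 2$) acquires an external competition before $S_1$ is exhausted, which would leave two partially-removed negative PS and break the claim. I would handle this by noting that Step 4 processes agents strictly in the order dictated by $S_1$ first (Step 3 initializes $i$ inside $S_1$, and Case 4a advances to "the next agent in $S_1$"), so the process physically never examines an agent outside $S_1$ until $S_1$ is empty; combined with the GFT-bound argument above, this confines all "partial" removal to $S_1$. A secondary technical point is the within-category tie-breaking and the duplicated-agent structure of external competitions in the general-recipe setting (Step 4 uses a single agent per category, duplicated $r_g$ times): I would verify the monotonicity fact in this duplicated form, namely that the best external competition for an agent of $S_1$ uses the single highest-valued non-trading agent of each category, duplicated, and that its GFT is bounded by $\gft(S_1)$ after accounting for the multiplicities $r_g$ — this is a direct computation once the coordinate-wise value comparison is in place.
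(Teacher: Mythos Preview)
Your proposal has two concrete errors that break the argument.

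First, the indexing is reversed. After Step~2's ascending-GFT ordering, $S_1$ has the \emph{lowest} GFT; the negative procurement-sets are $S_1,\ldots,S_{w-k}$ (not $S_j$ for $j\geq k+1$), and $S_1$ is the \emph{most} negative PS, not the one with highest GFT among the negatives. The removal process therefore sweeps upward from the bottom, and the single possibly-surviving negative PS in the claim is $S_{w-k}$, not $S_1$. Several later statements (``next-lowest GFT'', etc.) inherit this reversal.

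Second --- and this is the substantive gap --- your key inequality, that the best external competition for an agent of the current set $S_m$ has GFT at most $\gft(S_m)$, is false for general recipes. The external competition duplicates a \emph{single} agent $r_g$ times per category; once earlier members of $S_m$ have been removed and returned to $W_{rm}$, the duplicated highest-value removed agent contributes $r_g\cdot\max(S_m\cap N_g)$, which can exceed the true sum $\sum_{a\in S_m\cap N_g} v_a$. Concretely: with recipe $(1,2)$ and $S_m=(5;-3,-4)$, so $\gft(S_m)=-2$, after removing buyer $5$ and seller $-4$ the external competition for seller $-3$ is $(5;-3,-3)$ with GFT $-1>-2$. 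The ``direct computation'' you defer to the end would therefore not go through.

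The bound that \emph{does} work compares against $S_{m+1}$ (or any still-intact higher-indexed PS): by the coordinate-wise monotonicity you correctly identify, every value appearing in the external competition --- the pivot candidate's value included --- is at most the \emph{minimum} value in the corresponding category of $S_{m+1}$, so the duplicated sum is bounded by $\gft(S_{m+1})$. The paper exploits precisely this, but via the contrapositive: assume some negative $S_j$ is entirely intact when the pivot is found; then both the pivot and all external-competition agents lie weakly below the category-wise minima of $S_j$, yielding $0\leq(\text{ext.\ comp.\ GFT})\leq\gft(S_j)<0$, a contradiction. Combined with the sequential removal order (which you do read correctly), this forces any partially-surviving negative PS to be $S_{w-k}$.
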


\begin{proof}
Assume to the contrary that $S_j$ is a negative PS and none of its agents were removed from the trade. 
Consider the pivot agent --- the agent with lowest value in $S_j\cap N_{g_o}$.
Since no agent of $S_j$ were removed, this agent has an external competition.
By definition of external competition,  it has a positive GFT,
but it is made of agents that are not in the trade. Since agents are considered by ascending value, the values of the agents in the external competition are lower than the values of agents from the same categories in $S_j$.
Hence, the GFT of $S_j$ is positive too --- contradicting the assumption that $S_j$ has a negative GFT.

Since the agents are removed by the order of the PS, which is an ascending order of GFT, if a PS (or a part of it) is removed, then all PS with a lower GFT are removed as well.
\end{proof}

We conclude from claim \ref{claim-S1} that the pivot agent will be found either in $S_{w-k+1}$ --- the PS with the smallest positive GFT, or in $S_{w-k}$ --- the PS with the largest negative GFT
(note that in Section \ref{sec:1recipe-ones} 
the pivot agent was always found in the PS with the smallest positive GFT).
 
\paragraph{Individual rationality:}
we prove that the price of each trading agent is at most the agent's reported value.
\begin{itemize}
\item Each trading agent in a category $g\neq g_o$ pays the value $v_g$ of a non-trading agent in the same category $g$. The agents in each category are ordered by value, 
and the value of each trading agent is at least as large as the value of each non-trading agent in the same category, so it is at least $v_g$.

\item Each trading agent in the pivot category pays $p_o$, which is calculated such that:
$ r_o \cdot p_o+ \sum_{g\neq g_o} r_g \cdot v_g = 0$.
Let $v_o$ be the value of the pivot agent.
By definition of external competition, 
$r_o \cdot v_o + \sum_{g\neq g_o} r_g \cdot v_g \geq 0$, 
so $v_o\geq p_o$.
Since agents are ordered by value, the values of other trading agents are at least $v_o$ which is at least $p_o$. 
\end{itemize}

\paragraph{Truthfulness.}
As in Section \ref{sec:1recipe-ones}, 
the monotonicity of the choice-rule is obvious, and we focus on proving that each trading agent pays his/her threshold value.

Assume first that the agent with $v_o$ is found in PS $S_{w-k}$ (the proof for the case $S_{w-k+1}$ is similar).
To calculate the threshold value of an agent $i$ from category $g$, we consider three cases w.r.t. the fixed category ordering:
\begin{enumerate}
\item The category $g$ comes before the pivot-category $g_o$. This means that $r_g$ agents from $g$ had been removed from $S_{w-k}$ before the pivot was found.
The highest-valued removed agent, whose value was denoted by $v_g$, participated in the external competition.
All agents whose value is higher than $v_g$ are in PS $S_j$ for $j\geq (w-k)+1$, they do not affect the mechanism in any way, and they remain in the trade. Any such agent whose value drops below $v_g$, replaces the $v_g$ agent in the PS $S_{w-k}$, has no external competition, and is removed from the trade. Therefore, $v_g$ is a threshold-value for all trading agents of $g$, and indeed $p_g = v_g$.

\item 
$g = g_o$.
All agents whose value is higher than $v_o$ (the value of the pivot agent) are in PS $S_j$ for $j\geq w-k$, they do not affect the mechanism in any way, and they remain in the trade.
Consider such agent $i$ whose value $v_i$ drops below
$v_o$ but above $p_o$ (recall that $v_o\geq p_o$).
We claim that $I_{co}$ is still an external competition for this agent.
This is because, when each agent in $I_{co}$ is duplicated in accordance with the PS recipe, the sum of their values is $-r_o\cdot p_o$ (by definition of $p_o$). 
Adding $r_o$ copies of $v_i$ gives a total GFT of $r_o\cdot v_i - r_o\cdot p_o = r_o\cdot (v_i-p_o)\geq 0$.
Hence, this agent remains in the trade.
But an agent whose value drops below $p_o$
no longer has an external competition, since now $v_i-p_o < 0$, and thus is removed from the trade. 
Therefore, $p_o$ is indeed a threshold-value for all agents of $g_o$.

\item The category $g$ comes after the pivot-category $g_o$.

This means that no agent from $N_g \cap S_j$, for any $j\geq w-k$, had been removed from the trade before the pivot was found.
So $v_g$ must be the highest value of a member of $S_{w-k-1}\cap N_g$.
Let $v'_g$ be the next-highest value of an agent in $g$ --- the lowest value of a member in $S_{w-k}\cap N_g$. 
All agents whose value is higher than $v_g'$ are in PS $S_j$ for $j\geq w-k$,  they do not affect the mechanism in any way, and they remain in the trade.

Any such agent $i$ whose value $v_i$ drops below
$v_g'$ but above $v_g$
becomes a member of $S_{w-k}$, since the highest-valued $r_g \cdot (w-k)$ agents of $N_g$ are divided into $w-k$ procurement-sets, and $v_i$ is still among  the highest-valued $r_g \cdot (w-k)$ agents (albeit the lowest-valued of them).
Since no agent of $S_{w-k}\cap N_g$ is removed in this case, $v_i$ remains in the trade.

In contrast,  if $v_i$ drops below $v_g$, then
it becomes a member of $S_{w-k-1}$.
Since $v_g$ did not have an external competition, $v_i$ too does not have an external competition and is removed from trade.

Therefore, $v_g$ is a threshold-value for all agents of $g$, and indeed $p_g = v_g$.
\end{enumerate}

\paragraph{Gain-from-trade.}
Consider first the case in which all negative PS have been removed. Then the situation is similar to the one in 
Theorem \ref{thm:1recipe-ones}: all agents in the $k-1$ highest positive PS participate in the lottery, plus some agents in the $k$th-highest positive PS (denoted here by $S_{w-k+1}$).
Therefore, for each category $g$ and for each $j\geq w-k+2$,
each member of $S_j\cap N_g$ participates in the lottery.
In the lottery, at least $(k-1)\cdot r_g$ agents of each category $g$ are selected at random out of at most $k\cdot r_g$. 
Therefore, each agent has a chance of at least $((k-1)\cdot r_g)/(k\cdot r_g)$ to participate in the trade.
The expected contribution to the GFT of each such agent with value $v_i$ is 
$v_i(k-1)/k$.
By the linearity of expectation, we can sum over all agents in $N_g$
and all $k-1$ PS. We get that the expected GFT is at least $(k-1)/k$ of OPT.

Consider now the case in which some agents in a negative PS are not removed. By Claim \ref{claim-S1}, all these agents must be in the PS $S_{w-k}$, which is the $(k+1)$-th highest PS.
During the lottery, $k\cdot r_g$ agents of each category $g$ are selected at random, and $k$ new PS are formed. 
We can assume w.l.o.g. that all agents from 
$S_{w-k}$ (if any) are put in a single new PS.
Note that this new PS must have a positive GFT, 
since 
all non-removed agents (even the agents in $S_{w-k}$) have an external competition. This means that they have a positive GFT when combined with some non-trading agents, whose values are even lower than the values of trading agents.
The remaining $k-1$ new PS contain only agents from the $k$ positive PS.
Each agent from these $k$ PS has a chance of at least $(k-1)/k$ to participate in the trade. Hence the expected GFT is at least $(k-1)/k$ of OPT, as above.

\subsection{Ascending-prices auction}
\label{sub:general-clock}
Similarly to subsection  
\ref{sub:1recipe-ones-clock},
the auction maintains a price $p_g$ for each category $g$.
All prices are initialized to $-\infty$, and initially all agents are in the trade.

\paragraph{Step 1: Initialization.}
For each category $g$, 
let $c_g := \floor{|N_g| / r_g}$; this is the largest number of PS that can be composed of agents of category $g$.
Let $c_{\min} := \min_{g\in G} c_g$; this is the largest number of PS that can be composed of the existing agents. 
For each category $g$ for which 
$\floor{|N_g| / r_g} > c_{\min}$,
increase the price $p_g$ such that some agents leave the trade, until the number of remaining agents in each category $g$ decreases such that $\floor{|N_g| / r_g} = c_{\min}$.
Note that, when $g$ is the category for which the minimum of $c_g$ is attained, 
we already have $\floor{|N_g| / r_g} = c_{\min}$ without any increase, so the price of this category (at least) remains $-\infty$, and the initial price-sum is negative.

Initialize $c := c_{\min}$. Informally, $c$ is the number of procurement-sets that we aim to construct from the traders currently in the market.
Note that initially, for every category $g$, we have $\floor{|N_g| / r_g} = c_{\min}$ so $|N_g| \geq r_g \cdot c$.
 
In the running example, $c_{buyers} = \floor{5/1} = 5$ and $c_{sellers} = \floor{9/2} = 4$ and $c_{\min}=4$.
In the initialization step, $p_{buyers}$ increases to $6$ so that the low-value buyer leaves. The market now has $4$ buyers and $9$ sellers. The value of $c$ is initially $4$.

\paragraph{Step 2.}
Loop over the categories in the pre-specified order. For each category $g$, increase $p_g$ until one of the following happens:

(a) The number of agents in $g$ drops to $r_g \cdot c$, or ---

(b) The \emph{weighted} sum of prices increases to zero, where the weights are the recipe constants, i.e.: $\sum_{g\in G}r_g\cdot p_g = 0$.

In case (a), repeat the step with the next category in the pre-specified order. After the last category, 
set $c := c - 1$ and cycle back to the first category.
If $c$ drops to $0$ then the auction ends and there is no trade.

In case (b), the auction terminates and the agents trade in the final prices. If, in some category, there are more remaining agents than needed to fill the procurement-sets, then a lottery is used to select who will trade.

In the running example,
at the first round, $p_{sellers}$ increases to $-11$ so that one seller leaves. Now there are $4$ buyers and $8$ sellers so exactly $4$ PS are supported. However, the price-sum is $6 + 2\cdot (-11) < 0$.
Hence, we decrease $c$ to $3$ and continue.

In the second round, $p_{buyers}$ increases to $9$ such that one buyer leaves, 
and $p_{sellers}$ increases to $-8$ such that two sellers leave. The weighted sum of prices is $9 + 2\cdot(-8) < 0$, so we decrease $c$ to $2$ and continue.

In the third round, $p_{buyers}$ increases to $13$ such that one buyer leaves. 
We start increasing $p_{sellers}$ towards $-5$ such that two sellers would leave, but during the increase, the price hits $-6.5$, and then the sum of prices is $13 + 2\cdot (-6.5) = 0$, so the auction stops.

There are now two remaining buyers ($17,14$) and five remaining sellers ($-1,-2,-3,-4,-5$). 
All remaining buyers trade and pay $13$; 
$4$ out of $5$  remaining sellers are selected at random,  they trade and receive $6.5$.

\begin{theorem}
The ascending-prices auction of Subsection \ref{sub:general-clock} is strongly-budget-balanced, individually-rational and
obviously truthful, and its gain-from-trade approaches the optimum when the optimal market size ($k$) approaches $\infty$.
\end{theorem}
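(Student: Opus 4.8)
\noindent
\emph{Proof plan.}
The plan is to follow the proof of Theorem~\ref{thm:1recipe-ones-clock} essentially line by line, replacing the unweighted price-sum by the weighted sum $\sum_{g\in G} r_g p_g$, and to borrow the gain-from-trade bookkeeping of Subsection~\ref{sub:ec-proof} wherever the recipe constants $r_g$ enter.

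Strong budget-balance is immediate: the auction halts only in case~(b), where $\sum_{g\in G} r_g p_g = 0$; each traded procurement-set contains exactly $r_g$ agents of every category $g$, all paying $p_g$, so its payments sum to $\sum_{g\in G} r_g p_g = 0$. For individual rationality, a trading agent of category $g$ pays $p_g$, and --- being still in the trade at the (generic) final prices --- has value strictly above $p_g$; agents who have dropped out, or who remain but lose the lottery, pay nothing. Obvious truthfulness is argued exactly as in Theorem~\ref{thm:1recipe-ones-clock}: for an agent $i$ of category $g$, the strategy of dropping out precisely when $p_g$ reaches $v_i$ never yields negative utility, whereas from the first information set at which any deviation $T$ departs from it --- either $p_g$ equals the premature exit price of $T$, or $p_g = v_i$ --- every continuation of $T$ yields utility at most $0$; hence the truthful strategy is obviously dominant, and the Step~1 price jumps and the final lottery do not affect this per-agent comparison.

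The substantive step is the efficiency bound. Assume $k\ge 1$ (otherwise the auction trades nothing and $\gft = 0 = OPT$). Let $c^*$ be the value of $c$ when the auction halts and $g_o$ the category then being processed, and write $v_g^{(t)}$ for the $t$-th highest value in category $g$. At termination each category $g<g_o$ retains exactly $r_g c^*$ agents, category $g_o$ retains more than $r_{g_o}c^*$, every later category retains $r_g(c^*+1)$, and $c^*$ procurement-sets are formed and traded. First I would show $c^*\le k\le c^*+1$ by a two-sided squeeze. For $k\ge c^*$: every retained agent of category $g$ has value strictly above $p_g$, so the $c^*$ greedily-formed procurement-sets over the retained agents each have $\gft > \sum_{g} r_g p_g = 0$. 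For $k\le c^*+1$: at termination $p_g\ge v_g^{(r_g(c^*+1)+1)}$ for every $g$ (for $g<g_o$ because $p_g = v_g^{(r_g c^*+1)}$ and $r_g c^*+1\le r_g(c^*+1)+1$, similarly for $g=g_o$, and with equality for $g>g_o$), hence $0 = \sum_g r_g p_g \ge \sum_g r_g v_g^{(r_g(c^*+1)+1)}$, and the right-hand side is at least the $\gft$ of the $(c^*+2)$-th greedily-formed procurement-set (each of its $r_g$ agents of category $g$ has value at most $v_g^{(r_g(c^*+1)+1)}$); so that set is non-positive and there are at most $c^*+1$ positive procurement-sets. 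Thus at least $k-1$ procurement-sets trade, the one possibly dropped being the least profitable; from here the computation of Subsection~\ref{sub:ec-proof} gives expected $\gft$ at least $\tfrac{k-1}{k}\,OPT$, by noting that each optimal-trade agent of category $g$ that survives to the lottery trades with probability at least $(k-1)/k$ and summing over categories and optimal procurement-sets.

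I expect the last step to be the main obstacle, for two bookkeeping reasons. First, Step~1 only enforces $\lfloor|N_g|/r_g\rfloor = c_{\min}$, so a category may enter the loop carrying up to $r_g-1$ surplus agents; one has to check that the first full round deletes exactly those, so that the clean counts used above hold from that point on --- this is the ascending-auction counterpart of Claim~\ref{claim-S1}. Second, the regimes $k = c^*+1$ and $k = c^*$ must be treated separately, as in Theorem~\ref{thm:1recipe-ones}: in the former exactly $k-1$ sets trade and the surviving agents of every category lie among its top $r_g k$, whereas in the latter a category at or after $g_o$ may carry agents ranked below its top $r_g k$, and one recovers the $\tfrac{k-1}{k}$ factor from the elementary fact that the lowest block of $r_g$ values among the top $r_g k$ is at most their per-block average. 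Modulo these checks, the argument is a transcription of the proofs already given.
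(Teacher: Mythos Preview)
Your treatment of strong budget-balance, individual rationality, and obvious truthfulness is essentially identical to the paper's, as is the two-sided squeeze establishing $c^*\le k\le c^*+1$.

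The divergence is in the gain-from-trade step. The paper does \emph{not} attempt the $(k-1)/k$ bound you aim for; it settles for the cruder $(k-1)/(k+1)$. From $c^*\le k\le c^*+1$ it observes that every category retains between $r_g(k-1)$ and $r_g(k+1)$ agents and that at least $r_g(k-1)$ are selected, so each of the $r_g k$ optimal-trade agents wins the lottery with probability at least $(k-1)/(k+1)$, giving expected GFT at least $\tfrac{k-1}{k+1}\,OPT$. This one-line argument avoids the case split on $k=c^*$ versus $k=c^*+1$, avoids invoking the external-competition bookkeeping of Subsection~\ref{sub:ec-proof}, and sidesteps both obstacles you flag (the Step~1 surplus agents and the sub-optimal agents that can linger in the lottery when $k=c^*$). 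The price is the weaker constant, which is immaterial for the asymptotic statement of the theorem.

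Your route---pushing for $(k-1)/k$ by mirroring the external-competition analysis---is more ambitious and, if the case $k=c^*$ is handled carefully, would yield a genuinely sharper inequality than the paper proves. But note that the ``elementary fact'' you cite (lowest block $\le$ per-block average) controls only the top $r_g k$ values, whereas in the $k=c^*$ regime categories at or after $g_o$ carry up to $r_g(k+1)$ agents in the lottery, including $r_g$ non-optimal ones whose values are not bounded by that fact; you would need an additional argument (analogous to the positive-GFT observation after Claim~\ref{claim-S1}) to control their contribution. The paper simply does not enter this territory.
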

\noindent
\begin{proof}
SBB, IR and truthfulness are obvious --- just as in Theorem \ref{thm:1recipe-ones-clock}.
We now analyze the gain-from-trade.
Let $c_*$ be the final value of $c$.
Note that the number of agents remaining in each category $g$ is at least $r_g\cdot c_*$ and at most $r_g\cdot (c_*+1)$.

We claim that $c_*\leq k \leq c_*+1$:
\begin{itemize}
\item First, suppose that the price $p_g$ of each category $g$ is increased such that exactly $r_g\cdot c_*$ agents remain. Now the sum of prices is positive, and each price $p_g$ equals the $(r_g\cdot c_* + 1)$-th highest value in category $g$. This means that there are at least $c_*$ procurement-sets with a positive GFT, so $k\geq c_*$.
\item Second, suppose that the price $p_g$ of each category $g$ is decreased to its value at the end of round $c_*+1$, such that 
exactly $r_g\cdot (c_*+1)$ agents remain. 
Now the sum of prices is negative. 
Each price $p_g$ equals the $(r_g\cdot c_* + r_g + 1)$-th highest value in category $g$.
The expression $(r_g\cdot c_* + r_g + 1)$
is at most $r_g\cdot (c_* + 2)$, with equality holding iff $r_g=1$.
 This means that there are not $(c_*+2)$ procurement-sets with a positive GFT, so $k \leq c_*+1$. 
\end{itemize}
Hence, in each category $g$, the number of agents participating in the lottery is at least $(k-1)\cdot r_g$ and at most $(k+1)\cdot r_g$.
In the lottery, at least $(k-1)\cdot r_g$ 
agents from $N_g$ are selected.
Therefore, each agent from the $k\cdot r_g$ agents in the optimal trade has a chance of at least $(k-1)/(k+1)$ to win the lottery. Therefore, the expected GFT is at least 
$(k-1)/(k+1)$ of OPT.
\end{proof}

\section{Related Work}
\label{sec:related}
The literature on two-sided markets is large and we do not attempt to cover it here; 
see e.g.
\cite{Brustle2017Approximating,babaioff2018best,babaioff2020bulow} for some recent references.
Below we focus on auctions
for markets with three or more sides.

\citet{babaioff2004concurrent} handle a multi-sided market using multiple two-sided sub-markets, where each sub-market hosts an independent double-auction.
Their example is the \emph{lemonade industry}, which consists of lemon pickers, squeezers, and drinkers.
In our auction, all three categories bid together in a single centralized auction with recipe $(1,1,1)$; 
in their setting, there are three different double-auctions, one each for pickers, squeezers and drinkers. The sub-markets are constructed so that the optimal number of deals is the same in all of them. So if the double-auction mechanisms make deals as a function of the optimal number only, their protocol is guaranteed to have a material balance. However, their protocol does not preserve SBB:
in \url{https://github.com/erelsgl/auctions} we have a runnable example in which the same SBB double-auction is used in all sub-markets, but the combined outcome is not SBB.
\citet{babaioff2004concurrent} do present a SBB variant of their mechanism, but only in expectation, and it requires a prior distribution on the agents' valuations.  In contrast, our mechanism is SBB with probability 1, and requires no prior.

\citet{babaioff2005incentivecompatible} extend the above work from a linear supply-chain to an arbitrary directed acyclic graph. For example, they consider a market in which pickers sell lemons to squeezers, sugar-makers and squeezers sell to manufacturers, and manufacturers sell lemonade to drinkers. It still does not guarantee SBB.

\citet{chen2005efficient} consider a supply-chain auction
with a sole buyer and single item-kind, but there are different producers in different supply-locations. The buyer needs a different quantity of the item in different demand-locations. The buyer conducts a reverse auction and has to pay, in addition to the cost of production, also the cost of transportation from the supply-locations to the demand-locations. They do not guarantee SBB.
\cite{GGP2007} generalized the above settings to a unified trade reduction procidure. 

\cite{M08} designs fixed price SBB double auction under the assumptions that the buyer's distribution dominates the seller's distribution or that there is  exponential distribution. Our result does not assume knowledge of the distribution of participating categories. Additionally, we also allow for general number of categories as opposed to two.

\citet{CKLT16,colini2017approximately} also presents SBB auctions. Their auctions target double-sided and combinatorial markets respectively. However, their goal is to maximize social welfare as opposed to our goal which is maximizing gain from trade\footnote{When optimizing gain from trade we optimize the difference between the total value of the sold items for the buyers and the total value of these items for the sellers. When optimizing social welfare in a market we optimize the sum of the buying agents' valuations plus the sum of the unsold items' value held by selling agents at the end of trade. Despite their conceptual similarity, the two objectives are rather different in approximation. In many cases the social welfare approximation is close to the optimal social welfare solution; however, the gain from trade approximation may not be within any constant factor of the optimal gain from trade.}.
Thus their mechanisms are not asymptotically-optimal for gain from trade. They also require a prior on the agents' valuations.

\cite{FFG18,FG18} presents a multilateral randomized market with buyers, mediators and sellers in the context of ad auctions. Their sellers are pre-associated with the mediators and they assume that mediators and buyers arrive over time in some uniform random order. Moreover, their trade matches are conducted in two stages: first the mediator trade with the buyer on behalf of his sellers and then the mediator transfers payments to his matched sellers. Our auction unites all three categories of buyer, seller and mediator actions into a single simultaneous trade step. 

Some works attempt to handle mechanisms with buyers and sellers interacting through an intermediary or trader such as \cite{FMMP10} and \cite{BEKT07}. However, their design reduces to a one-sided auction 
or a two-sided auction. 

The present work handles multiple categories of agents, but each agent is single-parametric. An orthogonal line of work \cite{SegalHalevi2018MUDA,segal2018mida,gerstgrasser2019multi,GE2017} remains with two agent categories (buyers and sellers), but aims to handle multi-parametric agents. 
Another orthogonal line of work gets around Myerson --Satterthwaite in a different way: it relaxes truthfulness but keeps the maximum GFT. See e.g. \citet{lubin2008ice}.

\section{Experiments}
\label{sec:experiments}
We evaluated the performance of our auctions using a simulation experiment similar to the one described by \citet{mcafee1992dominant}. 
For several values of $n$ between $2$ and $1000$, 
and several recipe vectors,
we constructed a market with $n\cdot r_g$ agents of each category $g$, such that the  potential number of procurement-sets is $n$.
The value of each trader was selected uniformly at random.
For each $n$,
we made $50,000$ runs 
and averaged the results.

We conducted three experiments.%
\footnote{
The code used for the experiments and the experiment results are available at  \url{https://github.com/erelsgl/auctions}.
}

\subsection{Two-sided market}

\begin{table*}
\begin{center}
\begin{tabular}
{||c|c||c|c|c||c|c||c|c||}
\hline 
 & & \multicolumn{3}{|c||}{McAfee} & \multicolumn{2}{|c||}{External Comp.} &  \multicolumn{2}{|c||}{Ascending Price}  
\\ 
\hline 
$n$ & $k$ & $k'$ & Total GFT & Market GFT &  $k'$ & GFT  & $k'$ & GFT  
\\ 
\hline 
\csvreader
[head to column names]{experiment_comparing_mcafee_to_sbb_concise.csv}{}%
{
\n & \k &
\mcafeek & \mcafeetgft & \mcafeemgft &
\extcompk & \extcompgft &
\ascpricek & \ascpricegft 
\\
}%
\\\hline
\end{tabular} 
\end{center}
\caption{
\label{tab:results-11}
McAfee's mechanism versus our SBB mechanisms with the recipe $\mathbf{r}=(1,1)$.
}
\end{table*}

The first experiment was designed to compare our SBB mechanisms with  McAfee's mechanism. Since McAfee's mechanism is designed for procurement-sets with 1 buyer and 1 seller, we executed both SBB mechanisms with $\mathbf{r}=(1,1)$.
Each buyer's value was selected from $[1,1000]$ and each seller's value was selected from $[-1,-1000]$.

In each run, we calculated $k$ (the number of deals in the optimal trade) and $OPT$ (the optimal gain-from-trade).
We found that the average value of $k$ was approximately $n/2$.

For each mechanism, we calculated 
$k'$ (the actual number of deals done by the mechanism) 
and the gain-from-trade.
For McAfee's mechanism, we calculated both the \emph{total gain-from-trade} (including both the traders' gain and the auctioneer's profit), and the \emph{market gain-from-trade} (including only the traders' gain, without the auctioneer's profit). 
In our SBB mechanisms, the auctioneer's profit is zero by design, so these two measures are identical.

The results are presented at Table \ref{tab:results-11}. 
Below are some highlights.
\begin{itemize}
\item The actual number of trades ($k'$) is very near $k-0.5$ for all auctions.
Note that the theoretic lower bound is $k-1$, i.e., the mechanism might lose at most one optimal deal, but in average it loses about $1/2$ optimal deal. This is consistent with the results reported by \citet{mcafee1992dominant},
that his mechanism attains the optimal GFT in about 50\% of the cases.

\item Accordingly, the actual GFT ratio is much higher than the theoretical lower bound. For example, when $n=10$ (and $k\approx 5$), the theoretical lower bound is $80\%$, but all auctions attain more than $90\%$ of the optimal GFT.
The loss in gain-from-trade drops below $1/1000$ already for $k\geq 250$.

\item The External Competition and Ascending Price mechanisms have almost the same performance. This is not surprising, since they are essentially different implementations of the same rule.

\item 
The \emph{total GFT} of McAfee's mechanism is higher than the GFT of our SBB mechanisms. 
However, the \emph{market GFT} of McAfee's mechanism is, for all $n\geq 5$, lower than the GFT of our SBB mechanisms. 
This indicates that, for all but the most trivial markets, a strongly-budget-balanced auction is better for the traders, who enjoy all the surplus rather than yielding some of it to the auctioneer.

We do not have a sufficient explanation for this; perhaps this is due to McAfee's heuristic of ``guessing'' a price of $(b_{k+1}+s_{k+1})/2$. In any case, the difference becomes negligible when $k\geq 50$.
\end{itemize}

\subsection{One agent per category}

\begin{table*}
\begin{center}
\begin{tabular}{||c|c||c|c||c|c||c|c||c|c||}
\hline 
 & & \multicolumn{2}{|c||}{$2+1$ categories} & \multicolumn{2}{|c||}{$4+1$ categories} &  \multicolumn{2}{|c||}{$8+1$ categories}  
&  \multicolumn{2}{|c||}{$16+1$ categories} 
\\ 
\hline 
$n$ & $k$ & $k'$ & GFT &  $k'$ & GFT  & $k'$ & GFT  & $k'$ & GFT 
\\ 
\hline 
\csvreader
[head to column names]{experiment_sbb_with_vectors_of_ones_concise.csv}{}%
{
\n & \k &
\aack & \aacgft &
\aadk & \aadgft &
\aaek & \aaegft &
\aafk & \aafgft 
\\
}%
\\\hline
\end{tabular} 
\end{center}
\caption{
\label{tab:results-111}
The SBB ascending-prices auction where the recipe is a vector of ones, with  different numbers of categories.
}
\end{table*}

\begin{table*}
\begin{center}
\begin{tabular}{||c|c||c|c||c|c||c|c||c|c||}
\hline 
 & & \multicolumn{2}{|c||}{$\mathbf{r}=(1,2)$} & \multicolumn{2}{|c||}{$\mathbf{r}=(1,4)$} &  \multicolumn{2}{|c||}{$\mathbf{r}=(1,8)$}  
&  \multicolumn{2}{|c||}{$\mathbf{r}=(1,16)$} 
\\ 
\hline 
$n$ & $k$ & $k'$ & GFT &  $k'$ & GFT  & $k'$ & GFT  & $k'$ & GFT 
\\ 
\hline 
\csvreader
[head to column names]{experiment_sbb_with_two_categories_concise.csv}{}%
{
\n & \k &
\abk & \abgft &
\adk & \adgft &
\ahk & \ahgft &
\apk & \apgft 
\\
}%
\\\hline
\end{tabular} 
\end{center}
\caption{
\label{tab:results-12}
The SBB ascending-prices auction with two categories and different recipes.
}
\end{table*}

The second experiment considered markets with a recipe that is a vector of ones, and a varying number of categories.
Since the first experiment indicated that the performance of the external competition auction is very similar to that of the ascending prices auction, we did the second experiment only for the ascending prices auction.

We checked markets with one category of ``buyers'' (agents with a positive value) and $s$ categories of ``sellers'' (agents with a negative value), where $s\in\{2,4,8,16\}$. 
As in the previous experiment, the value of each seller was selected uniformly at random from $[-1,-1000]$.
The value of each buyer was selected uniformly at random from $[1,1000\cdot s]$, such that the average value in the entire market remained $0$.
As in the first experiment, the average value of $k$ was $\approx n/2$.%
\footnote{
Based on our empirical results, 
we believe that $k\approx n/2$ whenever 
the value of each agent is selected uniformly at random, such that 
the average value of all agents in the market is $0$.
However, proving this formally is an exercise in probability analysis which is beyond the scope of the present paper.
}

The results are presented at Table \ref{tab:results-111}. 
Both $k'$ and the GFT increase as the number of agents in each PS increases. This increase becomes negligible when $n$ grows. Still, it is reassuring to see that the performance does not become worse when the PS become larger.

\subsection{Two unbalanced categories}

The third experiment checked markets with two categories, where the recipe is $(1,s)$, for $s\in\{2,4,8,16\}$.
The value of each buyer was selected uniformly at random from $[1,1000\cdot s]$, such that the average value in the entire market remained $0$.
As in the first experiment, the average value of $k$ was $\approx n/2$.

The results are presented at Table \ref{tab:results-12}. As in the previous experiment, both $k'$ and the GFT increase with the number of agents in each PS, but the effect becomes negligible when $n$ grows.

We conclude that, in all scenarios that we checked,
our auctions perform significantly better that than their worst-case guarantees.

\section{Future Work}
\label{sec:future}
We are working on extending the SBB auctions in two directions.

\subsection{Multiple procurement-set recipes}
In a general market, several kinds of procurement-sets may be possible. Such a market can be represented by a finite set $R$, which contains one or more recipe-vectors of length $|G|$.

Some simple special cases are already handled by the current mechanisms. For example, consider a market with three categories and $R = \{(1,1,0),(1,0,1)\}$.
In this market, each deal requires one agent of category 1 and one agent of category either 2 or 3.
It is easy to see that this market is equivalent to a market with a single recipe $R = \{(1,1)\}$, since categories 2 and 3 can be united and treated as a single category. Both the calculation of the optimal trade and the calculation of the external competition work exactly as with the current mechanism.

Things become more interesting when the recipes are not equivalent, for example when $R = \{(1,1,0),(2,0,1)\}$.
For concreteness, denote the the three categories by seller, buyer and combiner, so that each deal requires either a seller+buyer, or two sellers and a combiner.
In this case, the optimal trade can no longer be computed by a simple greedy algorithm as in the case of a single recipe. 
To see this, consider the special case in which there are $n$ sellers, all of whom have a value of $0$. 
Then the problem is just to choose buyers and combiners. 
This is an instance of the \textsc{knapsack} problem, in which the weight of each buyer is $1$ and the weight of each combiner is $2$. It is known that greedy algorithms yield sub-optimal solutions to this problem. For example, in the following market:
\begin{itemize}
\item Buyers: $[99, 51]$
\item Combiner: $[100]$
\item Sellers: $[0,0,...]$
\end{itemize}
If there are two sellers, then the optimal trade requires the two buyers, so the greedy algorithm that selects  higher-valued agents first is not optimal (it selects only the combiner).
On the other hand, if there are three sellers, then the optimal trade requires the combiner and the highest-valued buyer, so the greedy algorithm that selects agents with higher value/weight ratio is not optimal (it selects the two buyers). 

Since the weights are integers, the optimal trade can be computed in pseudo-polynomial time using a dynamic programming algorithm. 
However, the ascending-price auction effectively induces a greedy selection of traders (since it induces only the highest-valued traders to remain), which might be sub-optimal. 
We believe that, if the weights are sufficiently small and the market is sufficiently large, then the greedy algorithm (and hence the ascending-prices auction) can still yield a good approximation of the optimal GFT. However, proving this in all cases, as well as studying other combinations of recipes, is a challenge for future work.

\subsection{Transaction costs}
In general, each procurement-set may have a different cost-of-transaction, depending on the geographic locations of the agents in the PS, as well as other factors. 
Such transaction costs make the computation of the optimal trade difficult, even before strategic considerations, and even when all transaction costs are common knowledge.

We have preliminary results showing that, 
without any restrictions on the transaction costs, 
there might be no mechanism that satisfies all the desirable properties of Theorem \ref{thm:1recipe-ones}. 
It is interesting to see whether such a mechanism can be found under some natural restrictions on the
transaction costs, such as the ones described by \citet{Chu2006Agent}.

\section*{Acknowledgments}
A preliminary version was accepted to the AAAI 2020 conference. We are grateful to three referees of AAAI 2020 for their helpful comments.

The following were added in the present version:
\begin{itemize}
\item A description and proof of correctness of the external-competition auction for general procurement-set recipes (subsections \ref{sub:ec-auction} and \ref{sub:ec-proof}).
\item Simulation experiments illustrating the performance of our mechanisms (Section \ref{sec:experiments}).
\end{itemize}

\fontsize{9.3pt}{10.3pt} \selectfont
\bibliographystyle{aaai}
\bibliography{AAAI-GonenR.9455}

\end{document}